\newcommand{\R}{{\mathbb R}}
\DeclareMathOperator{\KG}{KG}
\DeclareMathOperator{\sign}{sign}
\DeclareMathOperator{\sgn}{sgn}
\DeclareMathOperator{\End}{End}
\newtheorem{lm}{Lemma}
\newtheorem{thm}{Theorem}
\begin{document}

\title{Elementary proof of symmetry of the off-diagonal Seeley-DeWitt (and related Hadamard) coefficients.}
\author{Wojciech Kamiński}
\affiliation{Instytut Fizyki Teoretycznej, Wydzia{\l} Fizyki, Uniwersytet Warszawski, ul. Pasteura 5 PL-02093 Warszawa, Poland}

\begin{abstract}
We will prove in an elementary way that off-diagonal Seeley-DeWitt and Hadamard coefficients are (sesqui-)symmetric for smooth manifolds of arbitrary signature.
\end{abstract}

\maketitle

\section{Introduction}

Let us consider a smooth $d$ dimensional manifold $\mathcal{M}$ with the metric $g_{\mu\nu}$\footnote{We assume that manifolds are Hausdorff. A Hausdorff manifold equipped with a metric is automatically paracompact \cite{Geroch}.} and the field $\phi$ valued in the bundle $\mathcal{V}$ associated with $SO$ (or possibly $Spin$) structure, $\phi\in \Gamma({\mathcal M},\mathcal{V})$.  We are interested in the operator
\begin{equation}\label{eq-KG}
 \KG \phi=\left(\nabla_\mu+A_\mu\right)\left(\nabla^\mu+A^\mu\right)\phi
+B\phi
\end{equation}
that is a generalization of the Klein-Gordon operator. Here $A^\mu$ is a smooth matrix valued vector field and $B$ is a smooth matrix valued function
\begin{equation}
A^\mu(x)\colon \mathcal{V}_x\rightarrow \mathcal{V}_x,\quad B(x)\colon \mathcal{V}_x\rightarrow \mathcal{V}_x.
\end{equation}
We assume that $\mathcal V$ carries a $SO$ (or $Spin$) invariant sesqui-linear form (nondegenerate, but not necessarily positively definite), so we can define adjoint operation $\dagger$\footnote{In the case of real field one can consider complexification of the bundle. In this case we assume existence of a symmetric form on original $\mathcal V$ and construct from it a sesqui-linear form on $\mathcal V^{\mathbb C}$.}.
We assume that $(A^\mu)^\dagger=-A^\mu$ and $B^\dagger=B$ so the operator $\KG$ is formally hermitian. 
If $\mathcal V$ carries additional gauge structure, $A^\mu$ should incorporate also contributions from suitable connection.

The original Klein-Gordon operator for the scalar function (on a Lorentzian manifold in dimension $d$ with signature $-+\cdots +$) 
\begin{equation}
 \KG \phi=\phi_{;\mu}^{;\mu}-(M^2+\xi R)\phi
\end{equation}
belongs to the class with $A^\mu=0$ and $B=-M^2-\xi R$. 

We are interested in the (generalized wave) equation $\KG \phi=0$, its various parametrices (approximate Green functions) and distinguished approximate solutions. There is a costruction of the asymptotic expansion of the kernels of Hadamard parametrices (advanced and retarded propagators in the Lorentzian setup) as well as in asymptotic expansion of the Hadamard states (defined for $\KG$ scalar field on Lorentzian manifold, see \cite{Fulling}). In any case the formula is given (in some small neighbourhood of diagonal of $\mathcal{M}\times \mathcal{M}$) by an asymptotic series (see \cite{Fulling, Baer})
\begin{equation}\label{expansion}
\frac{1}{(4\pi)^{\frac{d}{2}}} \sum_{n=0} f_n^{SD}(x,x')R_n(x,x'),
\end{equation}
where the so called Seeley-DeWitt coefficients $f_n^{SD}$ are smooth functions taking values in linear operators from fiber in point $x'$ to fiber in point $x$:
\begin{equation}
f_n^{SD}(x,x')\colon \mathcal{V}_{x'}\rightarrow \mathcal{V}_{x}.
\end{equation}
The distributions $R_n$ are versions of scalar Riesz functions (different in each case above, that are functions of the half of the signed square geodesic distance $\sigma(x,x')$ and an additional data of causal relation of the points $x$ and $x'$) \footnote{For notation convention see section \ref{sec:notation}.}. Let us notice that we use slightly nonstandard normalization of the Riesz distributions, otherwise $f_n^{SD}$ should be rescaled to the Hadamard coefficients (see \cite{Hack2012} for thorought discussion). The rescaling constant depends only on the dimenion $d$ of the manifold  and $n$ (see \cite{Hack2012}).

The function $\sigma(x,x')$ is a covariant quantity, but in general it is well-defined only for sufficiently close points. Functions $f_n^{SD}$ depend on sufficiently high jets of the metric and coefficients of $A^\mu$ and $B$ along the geodesic connecting points $x$ and $x'$. They are also covariantly defined.

The coefficients $f_n^{SD}$ and closely related Hadamard and DeWitt-Schwinger functions (that are rescaled coefficients $f_n^{SD}$ and linear combination of them respectively, see \cite{Hack2012}) are important in application to quantum field theory in curved spacetime where they appear in effective actions as well as in regularization of the stress energy tensor. Indeed, the important property used in the stress energy tensor renormalization is the symmetry of the $f_n^{SD}$ coefficients \cite{Moretti2000}. 
In our general setup we will prove the following sesqui-symmetry property of the Seeley-DeWitt coefficients (valid not only in Riemannian or Lorentzian, but in arbitrary nondegenerate signature).

\begin{thm}\label{thm-1} For $x$ in  a geodesically convex (in the sense of \cite{Moretti1999}) neighbourhood of point $x'$
 $f_n^{SD}(x,x')=f_n^{SD}(x',x)^\dagger$.
\end{thm}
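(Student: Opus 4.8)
The plan is to characterize the coefficients $f_n^{SD}$ purely \emph{locally}, through the DeWitt transport equations along the unique geodesic joining $x'$ and $x$, and then to establish the identity by induction on $n$, using the formal hermiticity of $\KG$ to intertwine the two arguments. This keeps the whole argument inside a fixed geodesically convex neighbourhood and never invokes a heat kernel or a global parametrix, so it is insensitive to the signature.

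Concretely, I would first recall that inserting the ansatz \eqref{expansion} into the defining equation for the parametrix and using the eikonal equation $\sigma^{;\mu}\sigma_{;\mu}=2\sigma$ together with the defining recursion of the Riesz functions $R_n$ collapses the problem to a hierarchy of transport equations of the schematic form
\begin{equation}
\left(\sigma^{;\mu}\nabla_\mu+c_n\right)f_n^{SD}=\KG\,f_{n-1}^{SD},\qquad f_{-1}^{SD}=0,
\end{equation}
in which the differential operators act on the first (unprimed) argument and $c_n(x,x')$ is a scalar built from $n$ and the Van Vleck--Morette determinant $\Delta$. Parametrizing the connecting geodesic by $s\in[0,1]$ turns $\sigma^{;\mu}\nabla_\mu$ into $s\,\tfrac{d}{ds}$, so each $f_n^{SD}$ is recovered from $f_{n-1}^{SD}$ by an explicit integral along the geodesic, while the lowest coefficient is fixed by the homogeneous equation together with the coincidence condition, giving $f_0^{SD}=\Delta^{1/2}\mathcal P$, where $\mathcal P(x,x')\colon\mathcal V_{x'}\to\mathcal V_x$ is parallel transport for the full connection $\nabla+A$. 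Since this data determines every $f_n^{SD}$ uniquely, it suffices to show that $g_n(x,x'):=f_n^{SD}(x',x)^\dagger$ satisfies the same equations and initial value.

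The base case is clean. Because $\nabla$ preserves the sesqui-linear form and $(A^\mu)^\dagger=-A^\mu$, the full connection $\nabla+A$ is compatible with the form, so its parallel transport is an isometry: $\mathcal P(x,x')^\dagger=\mathcal P(x,x')^{-1}=\mathcal P(x',x)$. Combined with $\Delta(x,x')=\Delta(x',x)$ this yields $f_0^{SD}(x',x)^\dagger=f_0^{SD}(x,x')$, i.e.\ $g_0=f_0^{SD}$.

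The inductive step is the crux, and also where I expect the main obstacle. Assuming $g_m=f_m^{SD}$ for all $m<n$, I must verify the transport equation for $g_n$ \emph{as $x$ moves} along the geodesic issuing from $x'$. The difficulty is structural: the transport equations control only the radial derivative in the first slot, whereas forming $f_n^{SD}(x',x)^\dagger$ and then differentiating in $x$ produces a derivative in the \emph{second} slot of $f_n^{SD}$, which the recursion does not directly govern. The bridge between the two slots is precisely formal self-adjointness, which I would use in its pointwise Lagrange-identity form $\langle u,\KG v\rangle-\langle \KG u,v\rangle=\nabla_\mu J^\mu[u,v]$; combined with the symmetry of $\sigma$ and of its second covariant derivatives, this should convert the daggered source $(\KG\,f_{n-1}^{SD})(x',x)^\dagger$ into $\KG$ applied in the $x$-slot to $f_{n-1}^{SD}(x',x)^\dagger=f_{n-1}^{SD}(x,x')$, thereby reproducing the right-hand side of the transport equation for $f_n^{SD}$. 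Verifying this intertwining — that daggering together with the swap $x\leftrightarrow x'$ commutes with $\KG$ up to terms annihilated by the transport operator — is where all the real work lies; once it is in place, $g_n$ and $f_n^{SD}$ solve the same first-order ODE with the same initial value, and uniqueness of solutions closes the induction and gives $f_n^{SD}(x,x')=f_n^{SD}(x',x)^\dagger$ throughout the neighbourhood.
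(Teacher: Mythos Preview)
Your setup and base case are correct: the transport hierarchy, the identification $f_0^{SD}=\Delta^{1/2}H$, and the sesqui-symmetry of $\Delta^{1/2}H$ are exactly as in the paper. You also put your finger on the real obstruction --- the recursion only controls the radial derivative in the \emph{first} argument, while checking the transport equation for $g_n(x,x')=f_n^{SD}(x',x)^\dagger$ forces you to differentiate $f_n^{SD}$ in its \emph{second} argument.

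The gap is that the tool you propose does not close this. The pointwise Lagrange identity $\langle u,\KG v\rangle-\langle\KG u,v\rangle=\nabla_\mu J^\mu[u,v]$ relates $\KG$ to its formal adjoint at a \emph{single} point; it does not convert a first-slot operator at $x'$ into a first-slot operator at $x$. Concretely, even granting $f_{n-1}^{SD}(x',x)=f_{n-1}^{SD}(x,x')^\dagger$, there is no reason for $[\KG\,f_{n-1}^{SD}](x',x)^\dagger$ to equal $\KG\,f_{n-1}^{SD}(x,x')$: the first is $\KG$ acting at $x'$ on the first slot, the second is $\KG$ acting at $x$ on the first slot, and sesqui-symmetry of $f_{n-1}^{SD}$ alone does not intertwine them. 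The same problem hits the left-hand side: $\sigma^{;\mu}\nabla_\mu g_n$ involves $\nabla'_\mu f_n^{SD}$, about which the induction hypothesis says nothing. Your parenthetical ``up to terms annihilated by the transport operator'' is doing all the work, and you have not shown those terms are so annihilated. A direct induction along these lines would require, at each step, information about \emph{all} derivatives of $f_{n-1}^{SD}$ in the second slot, which the recursion never supplies.

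The paper circumvents this by packaging the whole tower into a single generating function $F^{SD}(x,x',\lambda)\sim\sum_n i^n\lambda^n f_n^{SD}$ and introducing operators $P$ and $P'$ (the latter being $P$ conjugated by the swap-and-dagger) for which the transport recursion becomes $P(F^{SD})=O(\lambda^\infty)$. The key identity is then purely algebraic: because $\KG$ on the first factor commutes with $\KG'_{dual}$ on the second and with $\partial_\lambda$, one has $P\bigl(\tfrac{i}{\lambda}P'(F)\bigr)=P'\bigl(\tfrac{i}{\lambda}P(F)\bigr)$. Feeding in $F=F^{SD}$ and a short coincidence-limit computation (hermiticity of $\lfloor\KG(\Delta^{1/2}H)\rfloor$) yields $P({F^{SD}}^*)=O(\lambda^\infty)$, and uniqueness then gives ${F^{SD}}^*=F^{SD}+O(\lambda^\infty)$. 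The commutativity of operators on different tensor factors is what replaces the intertwining you were hoping the Lagrange identity would provide.
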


In the case of the real scalar field (considered and proved in \cite{Moretti2000}) $f_n^{SD}$ are real functions, thus they are just symmetric. 
The proof in \cite{Moretti2000} (see also \cite{Hack2012}) relies on the result about coefficients in the asymptotic expansion of the Riemannian heat kernel \cite{Moretti1999} (let us notice that the heat kernel is symmetric, but symmetry of the coefficients in the expansion is a more subtle issue). One then applies local version of the Wick rotation to obtain the theorem for analytic Lorentian manifolds. Finally, a suitable approximation of a smooth manifold by analytic ones and a special property of differential equation defining the coefficients give the proof. 

The following a bit weaker property (see \cite{friedlander, Garabedian}), that is however sufficient in application in QFT \cite{Wald} was known earlier. In globally hyperbolic spacetimes advanced and retarded propagators are unique and one is adjoint to another. From this property one can deduce symmetry of the Hadamard function $V$ defined inside the lightcone by advanced or retarded propagator. The expansion in $\sigma$ around the light cone is given in $4$ dimensions by the formula
\begin{equation}
V\approx\frac{1}{(4\pi)^{2}} \sum_{n=0}\frac{(-1)^{n+1}}{2^{n+1}n!}f_{n+1}^{SD}\sigma^n.
\end{equation}
From Taylor expansion one can infer symmetry of the coefficients on the lightcones. 

The statement in \cite{Moretti2000} is obtained by analytic approximation and the proof depends on the signature of the metric.  The fact itself is very fundamental, so it is not satisfactory, that there is no elementary proof in the literature.
The goal of the present paper is to provide such a proof, that works in arbitrary signature (not only Riemannian or Lorentzian).

\section{Generalization}

In this section we will describe a bit stronger version of theorem \ref{thm-1}.
The Seeley-DeWitt coefficients $f_n^{SD}(x,x')$ are defined recursively by so-called Hadamard  transport differential equations along affinely parametrized geodesic $\gamma$ connecting points $x'$ and $x$ (we will describe them in details in section \ref{sec:def})
\begin{equation}
\gamma(0)=x',\ \gamma(1)=x.
\end{equation}
They are well-defined as long as there are \textbf{no points  conjugated to $x'$ along the geodesic} i.e.
\begin{equation}\label{cond-conjug}
 \forall_{s\in(0,1]} \left.\det \frac{\partial \gamma(s)}{\partial\dot{\gamma}(0)}\right|_{\gamma(0)}\not=0,
\end{equation}
where $\gamma(s)$ is regarded as a function of initial position $\gamma(0)$ and initial velocity $\dot{\gamma}(0)$ of the geodesic. This condition is true, in particular, if the points are sufficiently close (in precise terms $(x,x')$  belongs to some geodesically convex neighbourhood of the diagonal $\{(x,x)\colon x\in{\mathcal M}\}\subset\mathcal{M}\times \mathcal{M}$).

In general, functions $f_{\gamma\ n}^{SD}$ depend on the choice of $\gamma$ if there are more then one geodesic connecting two points. In fact we expect symmetry also in general case under condition that both $f_{\gamma\ n}^{SD}(x,x')$ and $f_{\gamma^{-1}\ n}^{SD}(x',x)$ are well-defined:
\begin{equation}\label{cond-conjug2}
 \forall_{s\in(0,1]} \left.\det \frac{\partial \gamma(s)}{\partial\dot{\gamma}(0)}\right|_{\gamma(0)}\not=0,\quad
  \forall_{s\in[0,1)} \left.\det \frac{\partial \gamma(s)}{\partial\dot{\gamma}(1)}\right|_{\gamma(1)}\not=0.
\end{equation}
Here we denote inverse geodesic by $\gamma^{-1}$ . However, our theorem is proven under a bit stronger assumption.

We consider manifold $({\mathcal M},g)$ of arbitrary nondegenerate signature and the operator $\KG$ \eqref{eq-KG}:

\begin{thm}\label{thm-2}
Let $\gamma$ be a geodesic connecting $x'$ and $x$ ($\gamma(0)=x'$, $\gamma(1)=x$) such that no two points on the geodesic between $x$ and $x'$ are conjugated:
\begin{equation}\label{cond-conjug3}
 \forall_{s\not=t,\ s,t\in [0,1]} \left.\det \frac{\partial \gamma(s)}{\partial\dot{\gamma}(t)}\right|_{\gamma(t)}\not=0,
\end{equation}
then $f_{\gamma\ n}^{SD}(x,x')=f_{\gamma^{-1}\ n}^{SD}(x',x)^\dagger$.
\end{thm}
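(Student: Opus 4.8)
The plan is to argue by induction on $n$, treating the Hadamard transport equations (to be recorded in section \ref{sec:def}) as first-order ordinary differential equations along $\gamma$. Three structural facts drive everything: the world function $\sigma$ and the van Vleck--Morette determinant $\Delta$ are symmetric in their two arguments; the transport operator $\sigma^{;\mu}\nabla_\mu$ restricts on $\gamma$ to $s\frac{d}{ds}$; and, because $\nabla$ is metric-compatible and $(A^\mu)^\dagger=-A^\mu$, parallel transport $P$ for the full connection $\nabla+A$ preserves the sesqui-linear form, so that $P(x,x')^\dagger=P(x,x')^{-1}=P(x',x)$. First I would trivialize each coefficient with the propagator, setting $U_n(s):=P(x'\leftarrow\gamma(s))\,f^{SD}_{\gamma\,n}(\gamma(s),x')\in\End(\mathcal{V}_{x'})$, so that the recursion becomes $\left(s\frac{d}{ds}+n\right)U_n(s)=\Sigma_n(s)$, where $\Sigma_n$ is the parallel-transported source assembled from $\KG$ applied to the previous coefficient, and $U_n$ is singled out by regularity at $s=0$.

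The base case is direct: $f^{SD}_0=\Delta^{1/2}P$, and symmetry of $\Delta$ together with unitarity of $P$ give $f^{SD}_0(x',x)^\dagger=f^{SD}_0(x,x')$. For the inductive step I would write the reversed coefficient in the same trivialized form, $W_n(\tau):=P(x'\leftarrow\gamma(\tau))\,f^{SD}_{\gamma^{-1}\,n}(\gamma(\tau),x)\,P(x\leftarrow x')$, which is regular at the opposite end $\tau=1$ and solves $\left(-(1-\tau)\frac{d}{d\tau}+n\right)W_n=\Sigma_n^W$. The substitution $\tau=1-s$ turns this into an equation of exactly the type obeyed by $U_n$: with $\widehat W_n(s):=W_n(1-s)$ one finds $\left(s\frac{d}{ds}+n\right)\widehat W_n(s)^\dagger=\Sigma_n^W(1-s)^\dagger$, still regular at $s=0$. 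A short computation using the unitarity of $P$ shows that the assertion $f^{SD}_{\gamma\,n}(x,x')=f^{SD}_{\gamma^{-1}\,n}(x',x)^\dagger$ is precisely $U_n(1)=\widehat W_n(1)^\dagger$. Since for $n\ge1$ the operator $s\frac{d}{ds}+n$ with regularity at $s=0$ has a unique solution for a given source, Theorem \ref{thm-2} reduces to the pointwise source identity $\Sigma_n(s)=\Sigma_n^W(1-s)^\dagger$ for all $s\in[0,1]$, which then forces $U_n\equiv\widehat W_n^\dagger$.

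The hard part will be this source identity. The source contains $\KG$ acting on $f^{SD}_{\,n-1}$ in its \emph{first} argument, whereas the inductive hypothesis $f^{SD}_{\,n-1}(y,z)=f^{SD}_{\,n-1}(z,y)^\dagger$ exchanges the arguments; matching the two sources thus demands converting a first-slot $\KG$ into a second-slot $\KG$ under the adjoint. This is exactly where formal hermiticity of $\KG$ and metric compatibility of $\nabla$ must enter, and the subtlety is that $\KG$ carries derivatives transverse to $\gamma$, so the manipulation cannot be performed on the geodesic alone. I would isolate it as a reciprocity lemma: a genuine identity, valid on a neighbourhood of $\gamma$ and only afterwards restricted to it, relating $\Delta^{1/2}\KG\!\left(\Delta^{-1/2}\,\cdot\,\right)$ in the first slot to the same expression in the second slot after applying $\dagger$ and swapping arguments. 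The strengthened no-conjugate-point hypothesis \eqref{cond-conjug3} is what makes this available: being an open condition, it guarantees that $f^{SD}_{\,n-1}$ and its symmetry hold not only on $\gamma$ but on all the neighbouring geodesics probed by the transverse derivatives.

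As a consistency check I would also compare with the iterated-integral form of the solution of the reduced recursion, in which geodesic reversal reverses the order of the nested integrations while $\dagger$ reverses the order of the operator factors; the cancellation of these two reversals is the same mechanism responsible for the symmetry of time-ordered exponentials and of Green's functions, and it confirms that the reciprocity lemma is the one genuinely nontrivial ingredient, everything else being bookkeeping of propagators and reparametrizations.
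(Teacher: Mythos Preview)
Your inductive scheme contains a genuine gap at the ``reciprocity lemma''. The pointwise source identity $\Sigma_n(s)=\Sigma_n^W(1-s)^\dagger$ that you need is simply false in general. Unpacking the definitions, the forward source at parameter $s$ is (after trivialization) built from $\KG_y f_{n-1}^{SD}(y,x')\big|_{y=\gamma(s)}$, while the reversed source at $1-s$ is built from $\KG_y f_{n-1}^{SD}(y,x)\big|_{y=\gamma(s)}$. These involve $\KG$ applied to $f_{n-1}^{SD}$ with \emph{different fixed endpoints} $x'$ versus $x$; the inductive hypothesis $f_{n-1}^{SD}(y,z)=f_{n-1}^{SD}(z,y)^\dagger$ relates neither of these to the other. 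Using it only converts $\KG$ in the first slot into a $\KG$-type operator acting in the \emph{second} slot of $f_{n-1}^{SD}$, which is not what appears in either recursion. So there is no identity of the form ``$\Delta^{1/2}\KG(\Delta^{-1/2}\,\cdot\,)$ in the first slot equals the same in the second slot under $\dagger$ and swap'': such a statement would equate differential operators acting on different copies of $\mathcal M$. The two solutions $U_n$ and $\widehat W_n^\dagger$ do coincide, but not because their sources coincide; this is precisely why the symmetry is nontrivial and why a naive induction does not close.

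The paper's proof avoids this trap by abandoning the order-by-order induction altogether. It packages all the coefficients into a single generating function $F^{SD}(x,x',\lambda)$ and introduces two transport-type operators $P$ and $P'$, which by Lemma~\ref{lm-2} are conjugates of $\KG+i\partial_\lambda$ and $\KG'_{dual}+i\partial_\lambda$ respectively. Because $\KG$ acts on the first copy of $\mathcal M$ and $\KG'_{dual}$ on the second, these commute automatically, yielding $P\!\left(\tfrac{i}{\lambda}P'(F)\right)=P'\!\left(\tfrac{i}{\lambda}P(F)\right)$. This commutation relation is the substitute for your missing reciprocity lemma: it lets one transfer the equation $P(F^{SD})=O(\lambda^\infty)$ to the statement $P\!\left(\tfrac{i}{\lambda}P'(F^{SD})\right)=O(\lambda^\infty)$, and then the uniqueness in Lemma~\ref{lm-1} together with a single coincidence-limit check (Lemma~\ref{lm-4}, essentially your base case plus one hermiticity computation on the diagonal, Lemma~\ref{lm-5}) forces $P'(F^{SD})=O(\lambda^\infty)$, i.e.\ $P({F^{SD}}^*)=O(\lambda^\infty)$, hence ${F^{SD}}^*=F^{SD}+O(\lambda^\infty)$. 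The commutativity of operators on disjoint factors does the work that your pointwise source comparison cannot.
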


In fact for Riemannian signature \cite{Morse} and causal (timelike or null) geodesics in Lorentzian signature \cite{global-lor, Beem} both conditions \eqref{cond-conjug3} and \eqref{cond-conjug2} follow from \eqref{cond-conjug}.  It is probably not true anymore in general (see \cite{Helfer} for a theory of conjugate points in arbitrary signature), so we will keep the stronger assumptions. 
Let us consider isometric embedding of the small open tubular neighbourhood of the geodesic $\gamma$ connecting point $x'$ with $x$
\begin{equation}
\tilde{\mathcal{M}}\hookrightarrow \mathcal{M}.
\end{equation}
For the geodesic $\gamma$ satisfying \eqref{cond-conjug3}, $\tilde{\mathcal M}$ can be chosen small enough such that every two points in $\tilde{\mathcal M}$ are connected by at most one geodesic in $\tilde{\mathcal M}$ that in addition satisfies \eqref{cond-conjug3}. We define a set
\begin{equation}
\tilde{U}=\{(y,y')\in \tilde{\mathcal{M}}\times \tilde{\mathcal{M}}\colon \exists \text{ a geodesic }\gamma'\colon[0,1]\rightarrow  \tilde{\mathcal{M}},\ \gamma'(0)=y', \gamma'(1)=y\}
\end{equation}
This set has the following properties
\begin{enumerate}
\item It is open because endpoints of every geodesic in $\tilde{M}$ can be perturbed by \eqref{cond-conjug3}.
\item Every pair of points of geodesic $\gamma$ belongs to $\tilde{U}$, i.e. $\gamma([0,1])\times\gamma([0,1])\subset \tilde{U}$.
\item Every pair $(y,y')\in \tilde{U}$ is connected by exactly one geodesic $\gamma'$ in $\tilde{M}$ and
\begin{equation}
\forall_{s,t\in[0,1]}\quad (\gamma'(s),\gamma'(t))\in \tilde{U}.
\end{equation}
\end{enumerate}
As the Hadamard transport differential equation for coefficients is defined locally along geodesics we can forget about our initial manifold and work exclusively in $\tilde{U}$. In the case when ${\mathcal M}$  is a geodesically convex neighbourhood of a point we can take $\tilde{\mathcal M}={\mathcal{M}}$ and $\tilde{U}={\mathcal M}\times {\mathcal M}$ thus we are back in the setup of theorem \ref{thm-1}.

\section{Notation conventions} 
\label{sec:notation}

Let us briefly describe our conventions. We are working with functions on $\tilde{U}\subset{\mathcal M}\times {\mathcal M}$ with values in certain bundles. We will consider in fact two cases
\begin{itemize}
\item The bundle is $\pi_1^*({\mathcal V})\otimes \pi_2^*({\mathcal V}^*)$ where $\pi_1$ and $\pi_2$ are projections onto first and second copy of the manifold and ${\mathcal V}^*$ is a dual bundle. At given point $(x,x')\in {\mathcal M}\times {\mathcal M}$ such a function $f$ takes values in linear operators from fiber over $x'$ to fiber over $x$ thus we write
\begin{equation}
f(x,x')\colon {\mathcal V}_{x'}\rightarrow {\mathcal V}_x
\end{equation}
\item The bundle is $\pi_2^*({\mathcal V})\otimes \pi_2^*({\mathcal V}^*)=\pi_2^*(\End {\mathcal V})$  and then we write
\begin{equation}
g(x,x')\colon {\mathcal V}_{x'}\rightarrow {\mathcal V}_{x'}
\end{equation}
\end{itemize}
In addition to functions on $\tilde{U}\subset{\mathcal M}\times {\mathcal M}$ we will also use smooth functions on $\tilde{U}\times \R\subset{\mathcal M}\times {\mathcal M}\times \R$ denoted by
\begin{equation}
F(x,x',\lambda)\colon \mathcal{V}_{x'}\rightarrow \mathcal{V}_{x},\quad G(x,x',\lambda)\colon \mathcal{V}_{x'}\rightarrow \mathcal{V}_{x'},
\end{equation}
We are mainly interested in these functions at $\lambda=0$ but up to any order in the derivative in $\lambda$. We will write $O(\lambda^\infty)$ for a function such that its any derivative in $\lambda$ vanishes for $\lambda=0$
\begin{equation}
h=O(\lambda^\infty) \Longleftrightarrow \forall_n \partial_\lambda^n h|_{\lambda=0}=0
\end{equation}
For a smooth kernel $F(x,x',\lambda)\colon \mathcal{V}_{x'}\rightarrow \mathcal{V}_{x}$  we introduce $*$ by
\begin{equation}
F^*(x,x',\lambda)=F(x',x,-\lambda)^\dagger.
\end{equation}
We will use this notation also for kernels independent of $\lambda$ that is
\begin{equation}
f^*(x,x')=f(x',x)^\dagger.
\end{equation}
We will consider differential operators on functions on ${\mathcal M}\times {\mathcal M}$ pull backed by $\pi_1$ (action on the first copy of the manifold) and by $\pi_2$ (action on the second copy). Let us consider a differential operators $D_1\colon \Gamma({\mathcal M}, {\mathcal V})\rightarrow \Gamma({\mathcal M}, {\mathcal V})$ and $D_2\colon \Gamma({\mathcal M}, {\mathcal V}^*)\rightarrow \Gamma({\mathcal M}, {\mathcal V}^*)$ We will use notation
\begin{equation}
D_1f:= (D_1\otimes{\mathbb I})f,\quad D_1F:=(D_1\otimes{\mathbb I}\otimes {\mathbb I})F,\quad D_2'f:=({\mathbb I}\otimes D_2)f,\quad D_2'F:=({\mathbb I}\otimes D_2\otimes {\mathbb I})F.
\end{equation}
Let us denote $D^{dual}\colon \Gamma({\mathcal M}, {\mathcal V}^*)\rightarrow \Gamma({\mathcal M}, {\mathcal V}^*)$ the dual differential operator to $D$ defined by equality on compactly supported smooth functions $\psi\in \Gamma({\mathcal M}, {\mathcal V}^*)$, $\phi\in \Gamma({\mathcal M}, {\mathcal V})$ by
$\int_{\mathcal M}(D^{dual}\psi, \phi)\sqrt{|\det g|}d^dx=\int_{\mathcal M}(\psi, D\phi)\sqrt{|\det g|}d^dx$. We have the following identity
\begin{equation}
D_{dual}'F=(D^\dagger F^*)^*.
\end{equation}
In order to simplify the notation we will adopt short version of covariant derivative operation 
\begin{equation}
f_{;\mu}=\nabla_\mu f,\quad f_{;'\mu}=\nabla_\mu' f,\quad F_{;\mu}=\nabla_\mu F,\quad F_{;'\mu}=\nabla_\mu' f,\quad 
\end{equation}
Similar notations apply to $G$ and $g$ functions. Let us stress that our functions $G$, $g$ are scalars from the point of view of first copy of $\mathcal M$ thus we can define
$G_{,\mu}$, $g_{,\mu}$. 
We will also use derivative with respect to third component in cartesian product
\begin{equation}
\partial_\lambda F:= ({\mathbb I}\otimes {\mathbb I}\otimes \partial)F
\end{equation}
Actions of differential operators on first copy of the manifold commute with actions on the second copy as well as with $\partial_\lambda$. 

We denote $g_{\mu\nu}$, $g^{\mu\nu}$, $\det g$ the metric, inverse metric and determinant of the metric respectively.
In the paper Synge notation of taking coinciding points $\lfloor F\rfloor(x')=F(x',x')$ and  $\lfloor F \rfloor (x',\lambda)=F(x,',x',\lambda)$ will be used. Often we will omit writing $x,x',\lambda$ if arguments are obvious from the context. We will also follow Einstein summation convention.

The value of the functions under consideration for a fixed argument is a linear operator between suitable fibers. Products in the formulas will be regarded as a proper composition of linear operators multiplied by scalars. For example
$f_{;\mu}(x,x')c(x')\sigma(x,x')^n$ (where $c(x')\colon {\mathcal V}_{x'}\rightarrow {\mathcal V}_{x'}$) is a composition of covariant derivative of $f$, that is a linear operator ${\mathcal V}_{x'}\rightarrow {\mathcal V}_{x}$, with $c$ multiplied by $\sigma(x,x')^n$.

\section{Local definitions of the Seeley-DeWitt coefficients}
\label{sec:def}

On a manifold $\mathcal M$ we define a geodesic map
\begin{equation}\label{map-can}
 (\gamma(0),\dot{\gamma}(0))\rightarrow (\gamma(1),\dot{\gamma}(1))
\end{equation}
associating final position $\gamma(1)$ and final velocity $\dot{\gamma}(1)$ of the affinely parametrized geodesic $\gamma$  to its initial position and velocity.
If the manifold is not geodesically complete, which is often the case, then the map is defined only for an open subset of velocities and positions. However, this is enough for what we need in our considerations.
The map \eqref{map-can}  is in fact a canonical transformation related to the Hamilton variation principle for the action
\begin{equation}
 S=\int_0^1 ds\ p_{\mu_s} \dot{\gamma}^{\mu_s}-\frac{1}{2}g^{\mu_s\nu_s}(\gamma)p_{\mu_s} p_{\nu_s}
\end{equation}
where one of the equations of motion is $p_{\mu_s}=\dot{\gamma}_{\mu_s}$. 
Let us assume that locally we can determine $\dot{\gamma}(0)$  from 
\begin{equation}
 (\gamma(0),\gamma(1))
\end{equation}
Consequently, locally we can use $x'=\gamma(0)$ and $x=\gamma(1)$ as variable parametrizing geodesics. One says that
\textbf{${\mathbf x'}$ and ${\mathbf x}$ are not conjugated along geodesic  ${\mathbf \gamma}$}. This condition  can be written as
\begin{equation}
\left. \det \frac{\partial \gamma(1)}{\partial\dot{\gamma}(0)}\right|_{\gamma(0)}\not=0.
\end{equation}
We can introduce Hamilton generating function (that maps $(x,x')$ into the value of the action on the solution of equations of motion with $x=\gamma(0)$ and $x'=\gamma(1)$)
\begin{equation}
 \sigma(x,x')=\frac{1}{2}\int_0^1 ds\ \dot{\gamma}_{\mu_s} \dot{\gamma}^{\mu_s},\quad \gamma(0)=x',\ \gamma(1)=x.
\end{equation}
In fact $\sigma$ is half of the (signed) square geodesic distance. Moreover, the following holds
\begin{equation}
 \sigma_{;\mu}=p_{\mu}(1)=\dot{\gamma}(1)_\mu,\quad \sigma_{;'\mu'}=-p_{\mu'}(0)=-\dot{\gamma}(0)_{\mu'}.
\end{equation}
Thus we can write
\begin{equation}
\left. \det\frac{\partial \gamma(1)}{\partial\dot{\gamma}(0)}\right|_{\gamma(0)}=(\det \left(-\sigma_{;\mu;'\mu'}\right))^{-1}.
\end{equation}
The determinant $\det \left(-\sigma_{;\mu;'\mu'}\right)$ is a density in both $x$ and $x'$. Comparing to the canonical measures given by the metric tensor we can introduce a function $\Delta$ known as Van Vleck-Morette determinant\footnote{Sometimes associated also with van Hove and Pauli \cite{Choquard}.}
\begin{equation}
 \Delta(x,x')=\frac{\det \left(-\sigma_{;\mu;'\mu'}\right)}{\sqrt{|\det g|}\sqrt{|\det g'|}}
\end{equation}
Sign of $\Delta$ is $(-1)^{\frac{d+\sign g}{2}}$ where $\sign g$ is the signature of the metric. Van Vleck-Morette determinant is a symmetric scalar function. We will adopt also notation
\begin{equation}
\Delta^{\frac{1}{2}}(x,x')=\sqrt{| \Delta(x,x')|},
\end{equation}
although the actual square root is defined up to a sign and depends on the signature. 

The modified parallel transport $H(x,x')\colon \mathcal{V}_{x'}\rightarrow \mathcal{V}_x$ along the geodesic $\gamma$ connecting point $x'$ with $x$ is defined by an equation along this geodesic 
\begin{equation}\label{H-def}
\dot{\gamma}^\mu \left(H_{;\mu}+A_\mu H\right)=0,\quad H(x',x')={\mathbb I}.
\end{equation}
The parallel transport satisfies
\begin{equation}\label{H-sym}
H(x,x')=H(x',x)^\dagger,\quad H(x,x')H(x',x)={\mathbb I},
\end{equation}
because $A$ is anti-hermitian and the geodesic $\gamma^{-1}$ connecting point $x$ with $x'$ is inverse of the geodesic $\gamma$. In particular, $H$ is an invertible matrix for all $x$ and $x'$.

We can now rewrite the Seeley-DeWitt coefficients introducing coefficients $g_n^{SD}(x,x')\colon \mathcal{V}_{x'}\rightarrow \mathcal{V}_{x'}$ (scalar from the point of view of the first copy of the manifold) by
\begin{equation}
f_n^{SD}(x,x')= \Delta^{\frac{1}{2}}(x,x')H(x,x')g_n^{SD}(x,x'),
\end{equation}
where we use our convention about the product ($\Delta^{\frac{1}{2}}(x,x')$ is a scalar function).

Let us consider the following \textbf{Hadamard recursion transport equations}
\begin{equation}
\sigma(x,x')_{,\mu}g_n^{SD}(x,x')^{,\mu}+ng_n^{SD}(x,x')- J(g_{n-1}^{SD})(x,x')=0,
\end{equation}
with operator $J$ defined by
\begin{equation}
J(g)(x.x')=\Delta^{\frac{1}{2}}(x,x')^{-1}H(x,x')^{-1}\KG(\Delta^{\frac{1}{2}}(x,x')H(x,x')g(x,x')),
\end{equation}
where for $n=0$ we assume that $g_{n-1}^{SD}=0$. This is a recursion definition by a transport equation. Due to its special behaviour at $x=x'$ for $n>0$ this equation has a unique smooth solution \cite{Hormander-III, Baer}\footnote{Different factors are due to difference between Seeley-DeWitt and Hadamard coefficients.}, defined inductively by
\begin{equation}\label{transport-1}
g_n^{SD}(x,x')=\int_0^1 ds\ s^{n-1} J(g_{n-1}^{SD})(\gamma(s),x'),
\end{equation}
where $\gamma$ is the geodesic starting from point $x'$ and arriving to point $x$ at affine time $1$.

For $n=0$ the equation implies only that function $g_0^{SD}(x,x')$ is independent of $x$ and we assume it to be
\begin{equation}\label{transport-2}
g_0^{SD}(x,x')={\mathbb I}.
\end{equation}

The equations and thus also coefficients $f_n^{SD}$ are well-defined as long as 
condition \eqref{cond-conjug} is satisfied. This condition means that $x'$ is not conjugated to any $\gamma(s)$, $s\in (0,1]$ along geodesic $\gamma$.
Functions $\Delta$, $\sigma$ and $f_n^{SD}$ depend on the choice of the geodesic $\gamma$, thus we should rather write
\begin{equation}
\Delta_\gamma,\quad \sigma_\gamma,\quad f_{\gamma\ n}^{SD}.
\end{equation}
However, we will now assume that condition \eqref{cond-conjug3} is satisfied and we consider coefficients as defined in $\tilde{U}$. They are then uniquely determined and we skip index $\gamma$. 

\section{Reformulation}

Let us consider a smooth function $F^{SD}(x,x',\lambda)\colon {\mathcal V}_{x'}\rightarrow {\mathcal V}_{x}$ such that
\begin{equation}
\forall_{n\geq 0}\quad  \frac{1}{i^nn!}\partial_\lambda^n F^{SD}(x,x',0)=f_n^{SD}(x,x').
\end{equation}
In fact such a function always exists by Borel theorem (see \cite{Treves} theorem 37.2 or Appendix \ref{Appendix}), although it is not unique. It is defined up to $O(\lambda^\infty)$ thus we will write
\begin{equation}
\label{FG-notation}
 {F}^{SD}(x,x',\lambda)=\sum_{n=0}^\infty i^n{f}_n^{SD}(x,x')\lambda^{n} +O(\lambda^\infty),
\end{equation}
even if the sum is not convergent.

In order to give a broader contex of our method let us consider for a moment the following kernel (see \cite{Birrell})
\begin{equation}\label{exp}
K(x,x',\lambda)=\frac{1}{(4\pi)^{\frac{d}{2}}}e^{i\frac{\pi}{4}\sgn\lambda\sign g}F^{SD}(x,x',\lambda)|\lambda|^{-\frac{d}{2}}e^{\frac{i}{2\lambda}\sigma(x,x')}.
\end{equation}
It has interesting properties. Namely, for any compactly supported (support in $\tilde{U}\cap\{x\}\times\tilde{\mathcal{M}}$) smooth function $\phi\in \Gamma(\tilde{\mathcal M},{\mathcal V})$
\begin{enumerate}
\item $\displaystyle\lim_{\lambda\rightarrow 0} \displaystyle\int K(x,x',\lambda)\phi(x') \sqrt{|\det g'|}d^dx'=\phi(x)$
\item $\displaystyle\int (\KG+i\partial_\lambda)K(x,x',\lambda)\phi(x') \sqrt{|\det g'|}d^dx'=O(\lambda^\infty)$
\end{enumerate}
Both equations follow from application of the stationary phase method. For the second one needs to notice (see lemma \ref{lm-1} and \ref{lm-2} below) that $(\KG+i\partial_\lambda)K(x,x',\lambda)=O(\lambda^\infty)e^{\frac{i}{2\lambda}\sigma(x,x')}$.

Let us also assume that $\KG$ is essentially self-adjoint and the scalar product in the bundle is positively defined. In this case $U_\lambda=e^{i\lambda \KG}$ is unitary thus
\begin{equation}
U_\lambda^\dagger=U_{-\lambda}
\end{equation}
If $K$ is a pointwise approximation of $U_\lambda$ up to any order in $\lambda$ then  multiplying both $K$ and $K^*$ by $e^{-\frac{i}{2\lambda}\sigma}={e^{-\frac{i}{2\lambda}\sigma}}^*$ (of absolute value $1$) and comparing Taylor expansions in $\lambda$ would show sesqui-symmetry of $f_n^{SD}$ coefficients.

However, it is not the case in general (see for example \cite{Duistermaat}) and we need to develop a different approach. Although \eqref{exp} is not a proper approximation, it is determined almost uniquely (up to $O(\lambda^\infty)$ term in $F^{SD}$) by certain properties. It is thus enough to show that $K^*$ satisfies these properties too.

\section{Proof of the sesqui-symmetry of the Seeley-DeWitt coefficients}

In this section we will present the proof of theorem \ref{thm-2}. For the sake of clarity we will postpone proofs of technical lemmata to the next sections.

Let us intoduce the operator $P$ on a functions on $\tilde{U}\times\R\subset{\mathcal M}\times {\mathcal M}\times \R$ defined in the following way:
\begin{enumerate}
\item For function $F(x,x',\lambda)\colon {\mathcal V}_{x'}\rightarrow {\mathcal V}_{x}$ we define $G(x,x',\lambda)\colon {\mathcal V}_{x'}\rightarrow {\mathcal V}_{x'}$ by equality
\begin{equation}\label{G-def}
F(x,x',\lambda)=\Delta^{\frac{1}{2}}(x,x')H(x,x')G(x,x',\lambda).
\end{equation}
\item Define $P(F)(x,x',\lambda)\colon {\mathcal V}_{x'}\rightarrow {\mathcal V}_{x}$ by
\begin{equation}\label{P-def}
P(F)=\Delta^{\frac{1}{2}}H\left(
\sigma_{,\mu} G^{,\mu}+\lambda\partial_\lambda G-i\lambda J(G)\right).
\end{equation}
\end{enumerate}

We will also consider $P'$ given by
\begin{equation}\label{Pp-def}
P'(F)=P(F^*)^*.
\end{equation}

\begin{lm}\label{lm-1}
Let us suppose that $P(F)=O(\lambda^\infty)$ then
\begin{equation}
F(x,x',\lambda)=F^{SD}(x,x',\lambda)c(x')+O(\lambda^\infty)
\end{equation}
where $c(x')=\lfloor F\rfloor (x',0)\colon {\mathcal V}_{x'}\rightarrow {\mathcal V}_{x'}$. In particular $P(F^{SD})=O(\lambda^\infty)$.
\end{lm}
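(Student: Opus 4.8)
The plan is to strip off the invertible, $\lambda$-independent prefactor $\Delta^{\frac{1}{2}}H$ and recognize the hypothesis as a family of transport equations indexed by the power of $\lambda$. First I would observe that $\Delta^{\frac{1}{2}}H$ is smooth and invertible on the domain (by \eqref{H-sym} together with $\Delta\neq0$) and does not involve $\lambda$, so by the definition \eqref{P-def} the assumption $P(F)=O(\lambda^\infty)$ is equivalent to
\begin{equation}
\sigma_{,\mu}G^{,\mu}+\lambda\partial_\lambda G-i\lambda J(G)=O(\lambda^\infty),
\end{equation}
where $G=(\Delta^{\frac{1}{2}}H)^{-1}F$ is the smooth kernel introduced in \eqref{G-def}. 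I would then write the $\lambda$-Taylor coefficients of $G$ at $\lambda=0$ as $g_n=\frac{1}{i^n n!}\partial_\lambda^n G|_{\lambda=0}$, so that $G=\sum_n i^n g_n\lambda^n+O(\lambda^\infty)$, and expand the three terms in the bracket in powers of $\lambda$.

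Next I would match powers of $\lambda$. Using linearity of $J$ and reindexing the contribution of $-i\lambda J(G)$, the coefficient of $i^n\lambda^n$ in the bracket is $\sigma_{,\mu}g_0^{,\mu}$ when $n=0$ and $\sigma_{,\mu}g_n^{,\mu}+ng_n-J(g_{n-1})$ when $n\geq1$. Since the bracket is $O(\lambda^\infty)$ and $G$ is smooth, every such coefficient must vanish, which gives precisely the Hadamard recursion transport equations obeyed by the $g_n^{SD}$. For $n=0$, the field $\sigma^{,\mu}\partial_\mu$ is the dilation (Euler) vector field along the geodesic issuing from $x'$, so $\sigma_{,\mu}g_0^{,\mu}=0$ says $g_0$ is constant along these geodesics; evaluating at coincidence, where $\lfloor\Delta^{\frac{1}{2}}\rfloor=1$ and $\lfloor H\rfloor=\mathbb{I}$ by \eqref{H-def}, yields $g_0=\lfloor F\rfloor(x',0)=c(x')=g_0^{SD}c(x')$, using $g_0^{SD}=\mathbb{I}$ from \eqref{transport-2}.

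I would finish by induction on $n$, propagating the constant right factor $c(x')$. Assuming $g_{n-1}=g_{n-1}^{SD}c(x')$, I note that $\KG$ acts only on the first copy while $c(x')$ is independent of $x$, so the operator $J$ commutes with right multiplication by $c(x')$, i.e.\ $J(g_{n-1}^{SD}c(x'))=J(g_{n-1}^{SD})c(x')$. The uniqueness of the smooth solution of the $n$-th transport equation, given explicitly by \eqref{transport-1}, then forces $g_n=g_n^{SD}c(x')$. Reassembling gives $F=\Delta^{\frac{1}{2}}HG=\sum_n i^n f_n^{SD}c(x')\lambda^n+O(\lambda^\infty)=F^{SD}c(x')+O(\lambda^\infty)$, and the final assertion $P(F^{SD})=O(\lambda^\infty)$ follows by running the same coefficient computation in reverse, since the $g_n^{SD}$ satisfy the transport equations by their very definition \eqref{transport-1}, \eqref{transport-2}. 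I expect no deep obstacle here; the only delicate points are the bookkeeping of the $O(\lambda^\infty)$ remainders (ensuring that passing to Taylor coefficients is legitimate and that $O(\lambda^\infty)$ terms never contaminate a finite-order equation) and the slightly exceptional $n=0$ case, whose homogeneous transport equation fixes $g_0$ only through its coincidence value rather than through \eqref{transport-1}.
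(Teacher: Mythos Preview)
Your proposal is correct and follows essentially the same route as the paper's proof: pass to $G=(\Delta^{\frac{1}{2}}H)^{-1}F$, expand $P(F)=O(\lambda^\infty)$ order by order in $\lambda$ to recover the Hadamard recursion $\sigma_{,\mu}g_n^{,\mu}+ng_n-J(g_{n-1})=0$, fix the $n=0$ case via the coincidence limit $\lfloor\Delta^{\frac{1}{2}}H\rfloor=\mathbb{I}$, and then propagate $g_n=g_n^{SD}c(x')$ by induction using that $J$ commutes with right multiplication by $c(x')$ and the uniqueness furnished by \eqref{transport-1}. Your additional remarks on the reverse direction for $P(F^{SD})=O(\lambda^\infty)$ and on the bookkeeping of $O(\lambda^\infty)$ remainders are accurate and match what the paper leaves implicit.
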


Let us now describe a way of attacking the problem. We will show that
\begin{equation}
P({F^{SD}}^*)=O(\lambda^\infty).
\end{equation}
This together with $ \lfloor {F^{SD}}^*\rfloor (x,0)=\lfloor {F^{SD}}\rfloor (x,0)^\dagger={\mathbb I}$ gives due to the lemma \ref{lm-1}
\begin{equation}
{F^{SD}}^*={F^{SD}}+O(\lambda^\infty).
\end{equation}
Comparing order by order Taylor expansion we get
\begin{equation}
f_n^{SD}(x',x)^\dagger=f_n^{SD}(x,x').
\end{equation}
This is exactly the sesqui-symmetry condition.

In order to accomplish this goal we need special property of $P$ and $P'$. Not complicated but a bit tedious computation leads to the next lemma.
\begin{lm}\label{lm-2}
For $\lambda\not=0$ and $F(x,x',\lambda)\colon {\mathcal V}_{x'}\rightarrow {\mathcal V}_{x}$ smooth function 
\begin{align}\label{eq-important1}
P(F)&=-i\lambda |\lambda|^{\frac{d}{2}}e^{-\frac{i}{2\lambda}\sigma}\left[\KG +i\partial_\lambda \right]\left( F|\lambda|^{-\frac{d}{2}}e^{\frac{i}{2\lambda}\sigma}\right),\\
\label{eq-important2}P'(F)&=-i\lambda |\lambda|^{\frac{d}{2}}e^{-\frac{i}{2\lambda}\sigma}\left[\KG_{dual}' +i\partial_\lambda \right]\left( F|\lambda|^{-\frac{d}{2}}e^{\frac{i}{2\lambda}\sigma}\right),
\end{align}
\end{lm}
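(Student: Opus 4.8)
The plan is to establish \eqref{eq-important1} by a direct computation and then deduce \eqref{eq-important2} from it by applying the involution $*$. For \eqref{eq-important1}, set $\Phi=F|\lambda|^{-\frac{d}{2}}e^{\frac{i}{2\lambda}\sigma}$ and expand $[\KG+i\partial_\lambda]\Phi$. Since $|\lambda|^{-\frac{d}{2}}$ is independent of $x$ and $e^{\frac{i}{2\lambda}\sigma}$ is a scalar in the first copy, the only nontrivial input is the product rule for $\KG$ on (scalar)$\times$(section): for a scalar $u$ one has $\KG(uF)=(u^{;\mu}{}_{;\mu})F+2u^{;\mu}(\nabla_\mu+A_\mu)F+u\,\KG F$. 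Applying this with $u=e^{\frac{i}{2\lambda}\sigma}$ produces $u^{;\mu}=\frac{i}{2\lambda}\sigma^{;\mu}u$ and $u^{;\mu}{}_{;\mu}=\big(\frac{i}{2\lambda}\sigma^{;\mu}{}_{;\mu}-\frac{1}{4\lambda^{2}}\sigma_{;\mu}\sigma^{;\mu}\big)u$, while the $\partial_\lambda$ derivative produces the companion factors $\partial_\lambda|\lambda|^{-\frac{d}{2}}=-\frac{d}{2\lambda}|\lambda|^{-\frac{d}{2}}$ and $\partial_\lambda e^{\frac{i}{2\lambda}\sigma}=-\frac{i\sigma}{2\lambda^{2}}e^{\frac{i}{2\lambda}\sigma}$.

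The first key step is that the two singular $\lambda^{-2}$ contributions, $-\frac{1}{4\lambda^2}\sigma_{;\mu}\sigma^{;\mu}$ from $\KG$ and $+\frac{\sigma}{2\lambda^2}$ from $i\partial_\lambda$, cancel exactly by the eikonal identity $\sigma_{;\mu}\sigma^{;\mu}=2\sigma$. After multiplying by the prefactor $-i\lambda|\lambda|^{\frac{d}{2}}e^{-\frac{i}{2\lambda}\sigma}$ of \eqref{eq-important1}, the common factor $|\lambda|^{-\frac{d}{2}}e^{\frac{i}{2\lambda}\sigma}$ cancels and the claim reduces to the purely algebraic identity
\[
P(F)=\tfrac{1}{2}\sigma^{;\mu}{}_{;\mu}F+\sigma^{;\mu}(\nabla_\mu+A_\mu)F-\tfrac{d}{2}F+\lambda\partial_\lambda F-i\lambda\,\KG F.
\]
To match this with the definition \eqref{P-def} I would substitute $F=\Delta^{\frac12}HG$ and use three standard facts. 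First, $\KG(\Delta^{\frac12}HG)=\Delta^{\frac12}HJ(G)$ by the very definition of $J$, producing the $-i\lambda J(G)$ term, while the $\lambda\partial_\lambda$ term is immediate as $\Delta^{\frac12}H$ is $\lambda$-independent. Second, because $\sigma^{;\mu}=\dot\gamma(1)^{\mu}$ is the geodesic tangent at $x$, the defining equation \eqref{H-def} gives $\sigma^{;\mu}(H_{;\mu}+A_\mu H)=0$, so differentiating $\Delta^{\frac12}HG$ along $\sigma^{;\mu}$ leaves only $\sigma^{;\mu}(\Delta^{\frac12})_{,\mu}HG+\Delta^{\frac12}H\,\sigma_{,\mu}G^{,\mu}$, the second summand being exactly the $\sigma_{,\mu}G^{,\mu}$ term of $P(F)$. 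Third, the leftover scalar terms $\tfrac12\sigma^{;\mu}{}_{;\mu}\Delta^{\frac12}+\sigma^{,\mu}(\Delta^{\frac12})_{,\mu}-\tfrac{d}{2}\Delta^{\frac12}$ vanish by the Van Vleck--Morette transport equation $(\Delta\sigma^{;\mu})_{;\mu}=d\Delta$, which in $\Delta^{\frac12}$ form reads $\sigma^{,\mu}(\Delta^{\frac12})_{,\mu}+\tfrac12\Delta^{\frac12}\sigma^{;\mu}{}_{;\mu}=\tfrac{d}{2}\Delta^{\frac12}$. This closes \eqref{eq-important1}.

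For \eqref{eq-important2} I would apply the involution $*$ to \eqref{eq-important1} evaluated at $F^{*}$, using $P'(F)=P(F^{*})^{*}$. Writing $\Theta=F|\lambda|^{-\frac{d}{2}}e^{\frac{i}{2\lambda}\sigma}$, the symmetry $\sigma(x,x')=\sigma(x',x)$ and the reality of $\sigma$ and $\lambda$ give $\Theta^{*}=F^{*}|\lambda|^{-\frac{d}{2}}e^{\frac{i}{2\lambda}\sigma}$ and fix the scalar prefactor $-i\lambda|\lambda|^{\frac{d}{2}}e^{-\frac{i}{2\lambda}\sigma}$ under $*$. It then remains to transport $[\KG+i\partial_\lambda]$ through the involution: the duality identity $D_{dual}'F=(D^\dagger F^{*})^{*}$ together with $\KG^\dagger=\KG$ gives $(\KG\Theta^{*})^{*}=\KG_{dual}'\Theta$, while $*$ anticommutes with $\partial_\lambda$ because of the $\lambda\to-\lambda$ reversal, so that $(i\partial_\lambda\Theta^{*})^{*}=i\partial_\lambda\Theta$. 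Assembling these yields exactly \eqref{eq-important2}.

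The computation itself is routine but bookkeeping-heavy; the only real content is recognising which standard geometric identities force the cancellations: the eikonal equation disposes of the singular $\lambda^{-2}$ terms, and the Van Vleck--Morette transport equation together with the parallel-transport property of $H$ along $\sigma^{;\mu}$ disposes of the $\tfrac12\sigma^{;\mu}{}_{;\mu}-\tfrac{d}{2}$ and connection terms. I expect the main pitfall to be sign- and factor-tracking, both in the $\lambda$-derivatives of $|\lambda|^{-\frac{d}{2}}e^{\frac{i}{2\lambda}\sigma}$ and, for \eqref{eq-important2}, in verifying that the antilinear, $\lambda$-reversing involution $*$ fixes the scalar prefactor while converting $\KG$ into $\KG_{dual}'$ and preserving $i\partial_\lambda$.
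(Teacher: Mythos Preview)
Your proof of \eqref{eq-important1} is correct and follows the same computation as the paper: you organise the $\KG$ expansion via the Leibniz identity $\KG(uF)=u^{;\mu}{}_{;\mu}F+2u^{;\mu}(\nabla_\mu+A_\mu)F+u\,\KG F$, whereas the paper equivalently conjugates the covariant derivative, $e^{-\frac{i}{2\lambda}\sigma}(\nabla_\mu+A_\mu)e^{\frac{i}{2\lambda}\sigma}=\nabla_\mu+\tfrac{i}{2\lambda}\sigma_{,\mu}+A_\mu$; the subsequent use of the eikonal identity, the parallel-transport equation for $H$, and the Van Vleck--Morette transport identity for $\Delta^{1/2}$ is identical. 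For \eqref{eq-important2} the paper simply says ``analogous'', while you instead deduce it from \eqref{eq-important1} by applying the involution $*$ and tracking the scalar prefactor, $\KG\mapsto\KG_{dual}'$ via $D_{dual}'F=(D^\dagger F^*)^*$ with $\KG^\dagger=\KG$, and $i\partial_\lambda\mapsto i\partial_\lambda$ via the $\lambda\to-\lambda$ reversal; this is a clean shortcut that avoids repeating the calculation.
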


This characterization  is related to the approximation property discussed earlier. It allow us to show

\begin{lm}
For $\lambda\not=0$ and $F(x,x',\lambda)\colon {\mathcal V}_{x'}\rightarrow {\mathcal V}_{x}$ smooth function the following identity holds
\begin{equation}
P\left(\frac{i}{\lambda}P'(F)\right)=P'\left(\frac{i}{\lambda}P(F)\right).
\end{equation}
\end{lm}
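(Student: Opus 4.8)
The plan is to reduce the identity to the elementary fact that two differential operators acting on different copies of the manifold, both independent of $\lambda$, commute. The mechanism is the conjugation structure made explicit by Lemma \ref{lm-2}. First I would introduce the invertible (for $\lambda\neq 0$) scalar factor $T=|\lambda|^{-\frac{d}{2}}e^{\frac{i}{2\lambda}\sigma}$, with inverse $T^{-1}=|\lambda|^{\frac{d}{2}}e^{-\frac{i}{2\lambda}\sigma}$, and set $\mathcal{A}=\KG+i\partial_\lambda$ and $\mathcal{B}=\KG_{dual}'+i\partial_\lambda$. In this notation the two formulas of Lemma \ref{lm-2} become simply
\begin{equation}
P(F)=-i\lambda\,T^{-1}\mathcal{A}\,T F,\qquad P'(F)=-i\lambda\,T^{-1}\mathcal{B}\,T F .
\end{equation}

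Next I would compute both composites. Since $\frac{i}{\lambda}\cdot(-i\lambda)=1$, the prefactor $\frac{i}{\lambda}$ exactly cancels the outer $-i\lambda$, so that $\frac{i}{\lambda}P'(F)=T^{-1}\mathcal{B}\,T F$, and therefore, using $T T^{-1}={\mathbb I}$,
\begin{equation}
P\!\left(\tfrac{i}{\lambda}P'(F)\right)=-i\lambda\,T^{-1}\mathcal{A}\,T\,T^{-1}\mathcal{B}\,T F=-i\lambda\,T^{-1}\mathcal{A}\mathcal{B}\,T F .
\end{equation}
By the same computation with the roles of $\mathcal{A}$ and $\mathcal{B}$ exchanged, $P'\!\left(\tfrac{i}{\lambda}P(F)\right)=-i\lambda\,T^{-1}\mathcal{B}\mathcal{A}\,T F$. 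Because $T$ is invertible, as $F$ ranges over all smooth kernels so does $TF$; hence, after dropping the common invertible conjugation $T^{-1}(\cdots)T$ and the nonzero scalar $-i\lambda$, the asserted identity is equivalent to the operator commutation relation $[\mathcal{A},\mathcal{B}]=0$.

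Finally I would verify this commutator by bilinearity,
\begin{equation}
[\mathcal{A},\mathcal{B}]=[\KG,\KG_{dual}']+[\KG,i\partial_\lambda]+[i\partial_\lambda,\KG_{dual}'],
\end{equation}
and observe that each term vanishes: $\KG$ is a differential operator on the first copy of $\mathcal{M}$ while $\KG_{dual}'$ acts on the second copy, so they commute by the convention recalled in Section \ref{sec:notation}; and both $\KG$ and $\KG_{dual}'$ are $\lambda$-independent, so they commute with $\partial_\lambda$. I expect the only mildly delicate point to be the bookkeeping rather than any conceptual obstacle: one must confirm that conjugation by the scalar factor $T$ is an honest algebra homomorphism (so commutativity transfers from $\mathcal{A},\mathcal{B}$ to $P,P'$) and that the matrix-valued noncommutativity of the coefficients never interferes, since $T$, $-i\lambda$ and $\frac{i}{\lambda}$ are all scalars and pass freely through the operators. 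Once the conjugation form of Lemma \ref{lm-2} is in place the conclusion is immediate.
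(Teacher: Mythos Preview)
Your proof is correct and follows essentially the same approach as the paper: both use Lemma~\ref{lm-2} to rewrite $P$ and $P'$ as conjugation by the scalar factor $|\lambda|^{-\frac{d}{2}}e^{\frac{i}{2\lambda}\sigma}$, observe that the $\frac{i}{\lambda}$ cancels the inner $-i\lambda$, and reduce the identity to the commutation $[\KG+i\partial_\lambda,\KG_{dual}'+i\partial_\lambda]=0$, which holds because operators on the first and second copies of $\mathcal{M}$ commute with each other and with $\partial_\lambda$. Your notation $T$, $\mathcal{A}$, $\mathcal{B}$ makes the conjugation structure slightly more explicit, but the argument is identical.
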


\begin{proof}
From \eqref{eq-important1},\eqref{eq-important2} the left and right hand side are equal to
\begin{align}
P\left(\frac{i}{\lambda}P'(F)\right)=&-i\lambda |\lambda|^{\frac{d}{2}}e^{-\frac{i}{2\lambda}\sigma}\left[\KG+i\partial_\lambda\right]\left[ \KG_{dual}'+i\partial_\lambda \right]\left( F|\lambda|^{-\frac{d}{2}}e^{\frac{i}{2\lambda}\sigma}\right),\\
P'\left(\frac{i}{\lambda}P(F)\right)=&-i\lambda |\lambda|^{\frac{d}{2}}e^{-\frac{i}{2\lambda}\sigma}\left[\KG_{dual}'+i\partial_\lambda\right]\left[ \KG+i\partial_\lambda \right]\left( F|\lambda|^{-\frac{d}{2}}e^{\frac{i}{2\lambda}\sigma}\right).
\end{align}
Equality follows because operators acting on the first copy of the manifold commute with those acting on the second copy, as well as any such operator commute with $\partial_\lambda$.
\end{proof}

Suppose that the smooth function $F(x,x',\lambda)$ has the property $F(x,x',0)=0$ then
\begin{equation}
\frac{i}{\lambda} F(x,x',\lambda)
\end{equation}
is also a smooth function (i.e. it extends smoothly to $\lambda=0$). In particular $\frac{i}{\lambda}O(\lambda^\infty)=O(\lambda^\infty)$.


\begin{lm}\label{lm-4}
The following holds
\begin{equation}
P'(F^{SD})(x,x',0)=0,\quad \lfloor \partial_\lambda P'(F^{SD})\rfloor (x',0)=0
\end{equation}
\end{lm}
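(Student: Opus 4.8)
The plan is to trade the dual operator $P'$ for $P$ acting on the starred kernel and then read everything off from the Hadamard recursion. Using $P'(F^{SD})=P({F^{SD}}^*)^*$ from \eqref{Pp-def} and the fact that $*$ is an involution preserving the value and (up to a sign) the $\lambda$-derivative at $\lambda=0$, it suffices to control $P({F^{SD}}^*)$ at $\lambda=0$ and its first $\lambda$-derivative. First I would put ${F^{SD}}^*$ into the form \eqref{G-def}. Writing $F^{SD}=\Delta^{\frac12}HG^{SD}$ with $G^{SD}=\sum_n i^n g_n^{SD}\lambda^n+O(\lambda^\infty)$ as in \eqref{FG-notation}, the symmetry of the scalar $\Delta^{\frac12}$ together with $H(x,x')^\dagger=H(x',x)$ and $H(x,x')H(x',x)=\mathbb I$ from \eqref{H-sym} gives
\begin{equation}
{F^{SD}}^*=\Delta^{\frac12}H\hat G,\qquad \hat G(x,x',\lambda)=H(x,x')^{-1}G^{SD}(x',x,-\lambda)^\dagger H(x,x').
\end{equation}
In particular $\hat G(x,x',0)=H^{-1}(g_0^{SD}(x',x))^\dagger H=\mathbb I$ identically, by \eqref{transport-2}.

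For the first identity I would evaluate the defining formula \eqref{P-def} at $\lambda=0$. The terms $\lambda\partial_\lambda\hat G$ and $-i\lambda J(\hat G)$ carry an explicit factor $\lambda$ and drop out, leaving $P({F^{SD}}^*)|_{\lambda=0}=\Delta^{\frac12}H\,\sigma_{,\mu}(\hat G|_{\lambda=0})^{,\mu}$. Since $\hat G|_{\lambda=0}=\mathbb I$ is independent of $x$ its gradient vanishes, so $P({F^{SD}}^*)|_{\lambda=0}=0$ and hence $P'(F^{SD})(x,x',0)=0$ after applying $*$.

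For the second identity I would differentiate \eqref{P-def} once in $\lambda$, set $\lambda=0$ and restrict to the diagonal. On the diagonal $\sigma_{,\mu}=0$, $\Delta^{\frac12}=1$ and $H=\mathbb I$, and $\Delta^{\frac12}H$ is $\lambda$-independent, so the only surviving contributions give
\begin{equation}
\lfloor\partial_\lambda P({F^{SD}}^*)\rfloor(x',0)=\lfloor\partial_\lambda\hat G\rfloor(x',0)-i\lfloor J(\hat G|_{\lambda=0})\rfloor(x').
\end{equation}
From the expansion of $\hat G$ one has $\lfloor\partial_\lambda\hat G\rfloor(x',0)=i\lfloor g_1^{SD}\rfloor(x')^\dagger$, while $\hat G|_{\lambda=0}=\mathbb I=g_0^{SD}$ and the $n=1$ Hadamard recursion on the diagonal (where $\sigma_{,\mu}=0$) give $\lfloor J(\mathbb I)\rfloor=\lfloor g_1^{SD}\rfloor$. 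Thus the vanishing of this expression is exactly the hermiticity $\lfloor g_1^{SD}\rfloor^\dagger=\lfloor g_1^{SD}\rfloor$, and $\lfloor\partial_\lambda P'(F^{SD})\rfloor(x',0)=0$ then follows by applying $*$.

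The main obstacle is therefore the single remaining input: the coincidence limit $\lfloor g_1^{SD}\rfloor=\lfloor J(\mathbb I)\rfloor=\lfloor\Delta^{-\frac12}H^{-1}\KG(\Delta^{\frac12}H)\rfloor$ is hermitian. This is the only place where the assumption $\KG^\dagger=\KG$ (that is $(A^\mu)^\dagger=-A^\mu$ and $B^\dagger=B$) enters, as it must, since the theorem fails without formal self-adjointness. I would establish it by the standard short computation, using the coincidence limits $\lfloor\Delta^{\frac12}\rfloor=1$, $\lfloor\Delta^{\frac12}{}_{;\mu}\rfloor=0$, $\lfloor\Delta^{\frac12}{}_{;\mu}{}^{;\mu}\rfloor=\tfrac16 R$, $\lfloor H\rfloor=\mathbb I$ and the coincidence values of the derivatives of $H$. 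Rewriting $\KG=\tilde\nabla^\mu\tilde\nabla_\mu+B$ with $\tilde\nabla=\nabla+A$, the antisymmetric field-strength $\nabla_\mu A_\nu-\nabla_\nu A_\mu+[A_\mu,A_\nu]$ does not contribute to $\lfloor g_1^{SD}\rfloor$ (it first appears in $g_2^{SD}$), leaving $\lfloor g_1^{SD}\rfloor$ equal to a real multiple of $R\,\mathbb I$ plus $B$, which is manifestly hermitian because $R$ is a real scalar and $B^\dagger=B$.
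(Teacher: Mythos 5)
Your proof is correct and its skeleton is the paper's own: part 1 is exactly the paper's argument (by \eqref{H-sym} one has ${F^{SD}}^*(x,x',0)=\Delta^{\frac{1}{2}}H$, hence $\hat G|_{\lambda=0}=\mathbb{I}$ and the $\sigma_{,\mu}\hat G^{,\mu}$ term drops), and your reduction of part 2 to hermiticity of $\lfloor g_1^{SD}\rfloor=\lfloor\KG(\Delta^{\frac{1}{2}}H)\rfloor$ is equivalent to the paper's, which arrives there by subtracting $\lfloor P(F^{SD})\rfloor=O(\lambda^\infty)$ rather than by quoting the $n=1$ Hadamard recursion. The genuine difference is how that hermiticity is proved. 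The paper's Lemma \ref{lm-5} does it abstractly: from $\lfloor(\Delta^{\frac{1}{2}}H)_{;\mu}\rfloor=-A_\mu$ and Synge's rule it rewrites $\lfloor\KG(\Delta^{\frac{1}{2}}H)\rfloor=-g^{\mu\nu}\lfloor(\Delta^{\frac{1}{2}}H)_{;\mu;'\nu}\rfloor-A_\mu A^\mu+B$, after which hermiticity follows from $(\Delta^{\frac{1}{2}}H)^*=\Delta^{\frac{1}{2}}H$ alone; no second-order coincidence value is ever computed. You instead invoke the explicit value $\lfloor g_1^{SD}\rfloor=\tfrac{1}{6}R\,\mathbb{I}+B$, which requires two additional inputs: $\lfloor\Delta^{\frac{1}{2}}{}_{;\mu}{}^{;\mu}\rfloor=\tfrac{1}{6}R$ (more than you need, since only realness of this scalar coincidence limit matters, and that is automatic) and, crucially, that $g^{\mu\nu}\lfloor(\nabla_\mu+A_\mu)(\nabla_\nu+A_\nu)H\rfloor=0$, i.e.\ your claim that the field strength of $A$ is absent at this order. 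That claim is true, but it is the entire crux of the lemma, and you assert it by appeal to standard results instead of proving it. To close it, differentiate the transport equation \eqref{H-def} twice and take coincidence limits using $\lfloor\sigma^{,\mu}{}_{;\nu}\rfloor=\delta^\mu_\nu$ and $\lfloor\sigma^{,\mu}{}_{;\nu;\rho}\rfloor=0$: this shows that $\lfloor(\nabla_\mu+A_\mu)(\nabla_\nu+A_\nu)H\rfloor$ is antisymmetric in $\mu,\nu$, so its trace vanishes. With that supplied your route is complete, at the price of more coincidence-limit machinery than the paper's argument, which buys in exchange an explicit formula for $\lfloor g_1^{SD}\rfloor$. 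Two small inaccuracies: hermiticity of $\lfloor g_1^{SD}\rfloor$ is not ``the only place where $\KG^\dagger=\KG$ enters'' — anti-hermiticity of $A^\mu$ is already used in your part 1 via \eqref{H-sym}; and your explicit formula uses only $B^\dagger=B$, so it does not by itself exhibit where $(A^\mu)^\dagger=-A^\mu$ becomes necessary.
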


We see that both
\begin{equation}
\frac{i}{\lambda} P'(F^{SD}),\quad \frac{i}{\lambda} P(F^{SD})
\end{equation}
are smooth functions as $P(F^{SD})=O(\lambda^\infty)$. Hence,
\begin{equation}
P\left(\frac{i}{\lambda} P'(F^{SD})\right)=P'\left(\frac{i}{\lambda} P(F^{SD})\right)=
P'(O(\lambda^\infty))=O(\lambda^\infty)
\end{equation}
and we obtain by lemma \ref{lm-1}
\begin{equation}
\frac{i}{\lambda} P'(F^{SD})(x,x',\lambda)=F^{SD}(x,x',\lambda)c(x')+O(\lambda^\infty)
\end{equation}
where $c(x')=\left\lfloor \frac{i}{\lambda} P'(F^{SD})\right\rfloor (x',0)=i\lfloor \partial_\lambda P'(F^{SD})\rfloor (x',0)=0$ and thus,
\begin{equation}
\frac{i}{\lambda} P'(F^{SD})(x,x',\lambda)=O(\lambda^\infty)\Rightarrow P({F^{SD}}^*)=O(\lambda^\infty).
\end{equation}
This finishes the proof of the theorem \ref{thm-2}.

\section{Characterization of the coefficients (proof of lemma \ref{lm-1})}

Let us recall the smooth function $G(x,x',\lambda)\colon {\mathcal V}_{x'}\rightarrow {\mathcal V}_{x'}$ defined by equality $F=\Delta^{\frac{1}{2}}H G$ \eqref{G-def}. We introduce its Taylor expansion in $\lambda$
\begin{equation}
G(x,x',\lambda)=\sum_{n=0}^\infty i^n\lambda^n g_n(x,x')+O(\lambda^\infty).
\end{equation}
Condition $P(F)=O(\lambda^\infty)$ expanded order by order in $\lambda$ gives Hadamard recursion relations for the coefficients $g_n$ (we use convention that $g_{-1}=0$)
\begin{equation}
\sigma(x,x')_{,\mu}g_n(x,x')^{,\mu}+ng_n(x,x')- J(g_{n-1})(x,x')=0.
\end{equation}
In particular $g_0$ is independent of $x$ thus $g_0(x,x')=\lfloor g_0\rfloor (x')$. From
$\lfloor \Delta^{\frac{1}{2}}\rfloor=1$ (see \cite{Fulling}) and $\lfloor H\rfloor={\mathbb I}$   follows that $\lfloor \Delta^{\frac{1}{2}}H\rfloor={\mathbb I}$. Thus
\begin{equation}
\lfloor g_0\rfloor (x')=\lfloor F\rfloor (x',0).
\end{equation}
Let us now write 
\begin{equation}
g_0(x,x')=g_0^{SD}(x,x')\lfloor g_0\rfloor (x').
\end{equation}
Inductively assuming that $g_{n-1}(x,x')=g_{n-1}^{SD}(x,x')\lfloor g_0\rfloor (x')$ we show by \eqref{transport-1} (uniqueness of the solution holds in $\tilde{U}$)
\begin{equation}
g_n(x,x')=\int_0^1 ds\ s^{n-1} J(g_{n-1})(\gamma(s),x')=
\int_0^1 ds\ s^{n-1} J(g_{n-1}^{SD})(\gamma(s),x')\lfloor g_0\rfloor (x')= g_n^{SD}(x,x')\lfloor g_0\rfloor (x'),
\end{equation}
where we used the fact that $J$ commutes with multiplication by $\lfloor g_0\rfloor (x')$.
Concluding
\begin{equation}
f_n(x.x')=f_n^{SD}(x,x')c(x'),
\end{equation}
where $c(x')=\lfloor g_0\rfloor (x')=\lfloor F\rfloor (x',0)$. Combining Taylor expansion, it follows
\begin{equation}
F(x,x',\lambda)=\sum_{n=0}^\infty i^n\lambda^n f_n(x,x')+O(\lambda^\infty)=
\sum_{n=0}^\infty i^n\lambda^n f_n^{SD}(x,x')c(x')+O(\lambda^\infty)=
F^{SD}(x,x',\lambda)c(x')+O(\lambda^\infty).
\end{equation}

\section{Formula for $P$ and $P'$ (proof of lemma \ref{lm-2})}

The proof of this lemma is a standard computation. The following identities are useful (see \cite{Fulling, Avramidi1986})
\begin{equation}\label{eq-P}
\sigma^{,\mu}(\Delta^{\frac{1}{2}})_{,\mu}=\left(\frac{d}{2}-\frac{1}{2}\sigma^{;\mu}_{;\mu}\right)\Delta^{\frac{1}{2}},\quad \sigma_{,\mu}\sigma^{,\mu}=2\sigma.
\end{equation}
Let us notice that $(\nabla_\mu+A_\mu)F |\lambda|^{-\frac{d}{2}}e^{\frac{i}{2\lambda}\sigma} =|\lambda|^{-\frac{d}{2}}
e^{\frac{i}{2\lambda}\sigma}\left(\nabla_\mu+\frac{i}{2\lambda}\sigma_{,\mu}+A_\mu\right)F$ thus
\begin{equation}\label{eq-P-1}
\KG \left(F|\lambda|^{-\frac{d}{2}}e^{\frac{i}{2\lambda}\sigma}\right)=|\lambda|^{-\frac{d}{2}}e^{\frac{i}{2\lambda}\sigma}\left(
\KG+\frac{i}{\lambda}\left(\sigma_{,\mu} \nabla^\mu+\sigma^{,\mu} A_\mu+\frac{1}{2}\sigma_{;\mu}^{;\mu}\right)-\frac{1}{4\lambda^2}\sigma_{,\mu}\sigma^{,\mu}\right) F,
\end{equation}
Another important identity is
\begin{equation}\label{eq-P-2}
i\partial_\lambda \left(F|\lambda|^{-\frac{d}{2}}e^{\frac{i}{2\lambda}\sigma}\right)=|\lambda|^{-\frac{d}{2}}e^{\frac{i}{2\lambda}\sigma}\left(\frac{1}{2\lambda^2}\sigma -\frac{i}{\lambda}\frac{d}{2}+i\partial_\lambda \right)F.
\end{equation}
Adding \eqref{eq-P-1} to \eqref{eq-P-2} and using \eqref{eq-P} we obtain
\begin{equation}\label{eq-P-3}
(\KG +i\partial_\lambda)\left(F|\lambda|^{-\frac{d}{2}}e^{\frac{i}{2\lambda}\sigma}\right)=|\lambda|^{-\frac{d}{2}}e^{\frac{i}{2\lambda}\sigma}\left(
\KG+i\partial_\lambda+\frac{i}{\lambda}\left(\sigma_{,\mu} \nabla^\mu+\sigma^{,\mu} A_\mu+\frac{1}{2}\sigma_{;\mu}^{;\mu}-\frac{d}{2}\right)\right) F.
\end{equation}
In order to derive the result we need the following identity 
\begin{equation}
\sigma_{,\mu} \nabla^\mu\left(\Delta^{\frac{1}{2}} H G\right)=
\sigma^{,\mu} \left(\Delta^{\frac{1}{2}} H G_{,\mu}+\Delta^{\frac{1}{2}} H_{;\mu} G+\Delta^{\frac{1}{2}}_{,\mu} H G\right)=
\sigma^{,\mu} \left(\Delta^{\frac{1}{2}} H G_{,\mu}-\Delta^{\frac{1}{2}} A_{\mu}H G\right)+\left(\frac{d}{2}-\frac{1}{2}\sigma_{;\mu}^{;\mu}\right)\Delta^{\frac{1}{2}} H G
\end{equation}
From the last equation we get
\begin{equation}\label{eq-P-4}
\left(\sigma_{,\mu} \nabla^\mu+\sigma^{,\mu} A_\mu+\frac{1}{2}\sigma_{;\mu}^{;\mu}-\frac{d}{2}\right) F=
\Delta^{\frac{1}{2}} H \sigma_{,\mu} G^{,\mu}.
\end{equation}
Applying \eqref{eq-P-4} to \eqref{eq-P-3}
we obtain desired identity for $P$. The case of $P'$ is analogous.

\section{Properties of $P'(F^{SD})$ (proof of lemma \ref{lm-4})}

Let us notice that 
\begin{equation}
P'(F^{SD})^*(x,x',0)=\Delta^{\frac{1}{2}}H\sigma_{,\mu} G_{,\mu}(x,x',0),
\end{equation}
where $G$ is defined by ${F^{SD}}^*=\Delta^{\frac{1}{2}}H G$.
However ${F^{SD}}^*(x,x',0)=(\Delta^{\frac{1}{2}}H)^*=\Delta^{\frac{1}{2}}H=F^{SD}(x,x',0)$ by \eqref{H-sym}. Thus $G(x,x',0)=G^{SD}(x,x',0)={\mathbb I}$ and $P'(F^{SD})(x,x',0)=0$. This proves the first part of the lemma.

From $\lfloor \Delta^{\frac{1}{2}}H\rfloor={\mathbb I}$ and $\lfloor \sigma_{,\mu}\rfloor=\lfloor \sigma_{,'\mu}\rfloor=0$ in \eqref{P-def} and \eqref{Pp-def} it follows that
\begin{equation}
\lfloor P'(F^{SD})\rfloor=\lambda \partial_\lambda \lfloor G^{SD}\rfloor -i\lambda 
\lfloor (\KG {F^{SD}}^*)^*\rfloor=\lambda \partial_\lambda \lfloor G^{SD}\rfloor -i\lambda 
\lfloor \KG {F^{SD}}^*\rfloor^\dagger.
\end{equation}
On the other hand
\begin{equation}
O(\lambda^\infty)=\lfloor P(F^{SD})\rfloor=\lambda \partial_\lambda \lfloor G^{SD}\rfloor -i\lambda 
\lfloor \KG F^{SD}\rfloor,
\end{equation}
so 
\begin{equation}
\lfloor P'(F^{SD})\rfloor=i\lambda\left(\lfloor \KG F^{SD}\rfloor-\lfloor \KG {F^{SD}}^*\rfloor^\dagger\right)+O(\lambda^\infty)\Rightarrow \partial_\lambda\lfloor P'(F^{SD})\rfloor(x,0)=i\lfloor \KG \Delta^{\frac{1}{2}}H\rfloor(x)-i\lfloor \KG \Delta^{\frac{1}{2}}H\rfloor(x)^\dagger.
\end{equation}
where we used $f_0^{SD}={f_0^{SD}}^*=\Delta^{\frac{1}{2}}H$.
 In order to prove that $\partial_\lambda\lfloor P'(F^{SD})\rfloor(x,0)=0$ we need the following lemma.

\begin{lm}\label{lm-5}
$\lfloor\KG(\Delta^{\frac{1}{2}}H)\rfloor=\lfloor\KG (\Delta^{\frac{1}{2}}H)\rfloor^\dagger$
\end{lm}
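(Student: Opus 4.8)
The plan is to evaluate the coincidence limit directly and observe that the answer is manifestly self-adjoint. Writing $\mathcal{D}_\mu=\nabla_\mu+A_\mu$ for the full (metric plus gauge) connection, so that $\KG=g^{\mu\nu}\mathcal{D}_\mu\mathcal{D}_\nu+B$ acting on the first copy, and abbreviating $u=\Delta^{\frac{1}{2}}$ (a real scalar bi-function) and $K_\mu=\mathcal{D}_\mu H=H_{;\mu}+A_\mu H$, one Leibniz expansion gives
\begin{equation}
\KG(uH)=g^{\mu\nu}\left(u_{;\mu\nu}H+u_{,\nu}K_\mu+u_{,\mu}K_\nu+u\,\mathcal{D}_\mu K_\nu\right)+B\,uH .
\end{equation}
Thus the problem is reduced to a short list of standard coincidence limits.

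First I would use $\lfloor u\rfloor=\lfloor\Delta^{\frac{1}{2}}\rfloor=1$, $\lfloor u_{,\mu}\rfloor=0$ and $\lfloor H\rfloor=\mathbb{I}$. The two middle terms then vanish in the limit because they carry the factor $\lfloor u_{,\mu}\rfloor=0$; the term $B\,uH$ contributes $B$, which is hermitian by assumption; and $g^{\mu\nu}\lfloor u_{;\mu\nu}\rfloor$ is a \emph{real scalar} (its known value is $\frac{1}{6}R$, but only its reality is needed), so it contributes a real multiple of $\mathbb{I}$, again hermitian. Everything hermitian-critical therefore sits in the single remaining term $g^{\mu\nu}\lfloor\mathcal{D}_\mu K_\nu\rfloor$.

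The heart of the argument is to show that this term vanishes, which I would do by proving that the symmetric part $\lfloor\mathcal{D}_{(\mu}K_{\nu)}\rfloor$ is zero. The defining equation for $H$ reads covariantly as $\sigma^{;\mu}K_\mu=0$ on all of $\tilde{U}$, since $\sigma^{;\mu}$ is the geodesic tangent at $x$. Differentiating this identity once with $\mathcal{D}_\nu$ and taking the limit, using $\lfloor\sigma^{;\mu}{}_{;\nu}\rfloor=\delta^\mu_\nu$ and $\lfloor\sigma^{;\mu}\rfloor=0$, already yields $\lfloor K_\nu\rfloor=0$. Differentiating a second time with $\mathcal{D}_\rho$ and using in addition $\lfloor\sigma^{;\mu}{}_{;\nu\rho}\rfloor=0$, every term carrying an undifferentiated $\sigma^{;\mu}$ or $K_\mu$ dies and one is left precisely with $\lfloor\mathcal{D}_\rho K_\nu\rfloor+\lfloor\mathcal{D}_\nu K_\rho\rfloor=0$. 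Contracting with the symmetric $g^{\mu\nu}$ then kills the term (the surviving antisymmetric part is the field strength $\frac{1}{2}\mathcal{F}_{\mu\nu}$, which never enters).

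Assembling the pieces gives $\lfloor\KG(\Delta^{\frac{1}{2}}H)\rfloor=\frac{1}{6}R\,\mathbb{I}+B$, manifestly equal to its own adjoint, which is the claim. The main obstacle is exactly the vanishing of $g^{\mu\nu}\lfloor\mathcal{D}_\mu K_\nu\rfloor$; the remaining steps are bookkeeping with the tabulated coincidence limits, all of which are algebraic consequences of the geodesic/Hessian structure and so hold in arbitrary nondegenerate signature. I note that the same result also admits an abstract reading: since ${f_0^{SD}}^*=f_0^{SD}$ and $\KG^\dagger=\KG$, the identity $\KG_{dual}'F=(\KG^\dagger F^*)^*$ gives $\lfloor\KG f_0^{SD}\rfloor^\dagger=\lfloor\KG_{dual}'f_0^{SD}\rfloor$, so the lemma is equivalent to the symmetry between acting with $\KG$ on the first copy and with $\KG_{dual}$ on the second; but establishing that symmetry reduces to the same coincidence-limit computation.
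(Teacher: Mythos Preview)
Your argument is correct, and it takes a genuinely different route from the paper's. The paper does \emph{not} compute the coincidence limit explicitly; instead it converts the second $x$-derivative into a mixed derivative via Synge's rule, arriving at
\[
\lfloor\KG(\Delta^{\frac{1}{2}}H)\rfloor=-g^{\mu\nu}\lfloor(\Delta^{\frac{1}{2}}H)_{;\mu;'\nu}\rfloor-A_\mu A^\mu+B,
\]
and then observes that the mixed-derivative term is hermitian because $(\Delta^{\frac{1}{2}}H)^*=\Delta^{\frac{1}{2}}H$, which interchanges the primed and unprimed indices under $\dagger$. So the paper leverages the already-established sesqui-symmetry of $f_0^{SD}=\Delta^{\frac{1}{2}}H$ rather than any explicit evaluation.

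Your approach stays entirely within first-copy calculus: you absorb $A_\mu$ into the connection $\mathcal{D}_\mu$, isolate the only potentially non-hermitian piece $g^{\mu\nu}\lfloor\mathcal{D}_\mu K_\nu\rfloor$, and kill it by twice differentiating the transport identity $\sigma^{;\mu}K_\mu=0$ to obtain $\lfloor\mathcal{D}_{(\mu}K_{\nu)}\rfloor=0$. This yields the explicit value $\lfloor\KG(\Delta^{\frac{1}{2}}H)\rfloor=g^{\mu\nu}\lfloor\Delta^{\frac{1}{2}}{}_{;\mu\nu}\rfloor\,\mathbb{I}+B$, which is stronger information than the paper extracts. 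What the paper's route buys is that it avoids invoking $\lfloor\sigma^{;\mu}{}_{;\nu\rho}\rfloor=0$ and the second-derivative coincidence limit of $\Delta^{\frac{1}{2}}$, relying instead on the structural symmetry $(\Delta^{\frac{1}{2}}H)^*=\Delta^{\frac{1}{2}}H$; what your route buys is an explicit formula and no need for primed derivatives or the $*$-operation at all.
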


\begin{proof}
From \eqref{H-def}, $\lfloor\Delta^{\frac{1}{2}}H\rfloor={\mathbb I}$  and $\lfloor\Delta^{\frac{1}{2}}_{,\mu}\rfloor=0$ (see \cite{Fulling}) we derive $\lfloor(\Delta^{\frac{1}{2}}H)_{;\mu}\rfloor=-A^\mu$.
Thus taking derivative
\begin{equation}
-A_{\mu;\nu}=\lfloor(\Delta^{\frac{1}{2}}H)_{;\mu}\rfloor_{;\nu}=
\lfloor(\Delta^{\frac{1}{2}}H)_{;\mu;\nu}+(\Delta^{\frac{1}{2}}H)_{;\mu;'\nu}\rfloor,
\end{equation}
contracting with $g^{\mu\nu}$ we obtain
$\lfloor(\Delta^{\frac{1}{2}}H)_{;\mu}^{;\mu}\rfloor=-\lfloor(\Delta^{\frac{1}{2}}H)_{;\mu}^{;'\mu}\rfloor-A^{\mu}_{;\mu}$
and
\begin{equation}
\lfloor\KG(\Delta^{\frac{1}{2}}H)\rfloor=
\lfloor (\Delta^{\frac{1}{2}}H)_{;\mu}^{;\mu}\rfloor + 2A^{\mu}\lfloor(\Delta^{\frac{1}{2}}H) _{;\mu}\rfloor+\left(A^\mu_{;\mu}+A_\mu A^\mu+B\right)\lfloor(\Delta^{\frac{1}{2}}H)\rfloor=
-g^{\mu\nu}\lfloor(\Delta^{\frac{1}{2}}H)_{;\mu;'\nu}\rfloor-A_\mu A^\mu+B.
\end{equation}
Due to $(\Delta^{\frac{1}{2}}H)^*=\Delta^{\frac{1}{2}}H$ 
\begin{equation}
\lfloor(\Delta^{\frac{1}{2}}H)_{;\mu;'\nu}\rfloor=
\lfloor(\Delta^{\frac{1}{2}}H)_{;'\mu;\nu}\rfloor^\dagger,
\end{equation}
so $g^{\mu\nu}\lfloor(\Delta^{\frac{1}{2}}H)_{;\mu;'\nu}\rfloor$ is hermitian as well as $A_\mu A^\mu$ and $B$.
\end{proof}

Let us notice that in case of standard Klein-Gordon equation the equality of lemma \ref{lm-5} is trivially satisfied because $\KG (\Delta^{\frac{1}{2}})$ is real and $H=1$.

\section{Summary}

In this note we proved in an elementary way an important property: the Seeley-DeWitt coefficients on smooth manifolds of arbitrary signature are sesqui-symmetric. Let us notice, that in this way we also proved that Hadamard, as well as DeWitt-Schwinger coefficients are symmetric as they are linear combinations of Seeley-DeWitt coefficients $f_n^{SD}$ (see \cite{Hack2012}).

\begin{acknowledgments}
Author would like to thank Jan Derezi{\'n}ski and Daniel Siemssen for useful discussions. This work was partially supported by BST.
\end{acknowledgments}

\appendix
\section{Taylor expansion}
\label{Appendix}

For convencience of the reader we will provide a proof of the following fact, a version of Borel theorem:

\begin{lm}
Let ${\mathcal N}$ be a Hausdorff, paracompact, smooth manifold and ${\mathcal W}$ a vector bundle on ${\mathcal N}$.
Suppose that we have smooth sections
\begin{equation}
h_n\in\Gamma({\mathcal N},{\mathcal W}),\quad n\geq 0,
\end{equation}
then there exists a smooth section of the pull-backed bundle to ${\mathcal N}\times \R$
\begin{equation}
H\in\Gamma({\mathcal N}\times\R,{\mathcal W}),
\end{equation}
such that
\begin{equation}
\frac{1}{i^nn!} \left.\partial_\lambda^n H(x,\lambda)\right|_{\lambda=0}=h_n(x).
\end{equation}
\end{lm}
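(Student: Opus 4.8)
The plan is to reduce the statement to the classical Borel theorem on $\R^d$ by means of a partition of unity, and then carry out the standard cutoff construction fibrewise in $\lambda$. By paracompactness of ${\mathcal N}$ I would first choose a locally finite open cover $\{U_\alpha\}$ over which ${\mathcal W}$ is trivial, together with a subordinate smooth partition of unity $\{\rho_\alpha\}$ whose supports $\mathrm{supp}\,\rho_\alpha\subset U_\alpha$ are compact. Writing $h_n=\sum_\alpha \rho_\alpha h_n$, each $\rho_\alpha h_n$ is a compactly supported section which, in the trivialization over $U_\alpha$, becomes a finite tuple of compactly supported smooth scalar functions on $U_\alpha$. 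Because the cover is locally finite, any sum $\sum_\alpha(\cdots)$ of sections supported in the respective $U_\alpha$ is locally a finite sum and therefore automatically smooth; thus the entire difficulty is concentrated in the series over $n$ inside a single chart.

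In a fixed chart I would fix a cutoff $\chi\in C^\infty_c(\R)$ with $\chi\equiv 1$ on $[-1,1]$ and $\mathrm{supp}\,\chi\subset[-2,2]$, and set
\begin{equation}
H_\alpha(x,\lambda)=\sum_{n=0}^\infty i^n(\rho_\alpha h_n)(x)\,\lambda^n\,\chi\!\left(\frac{\lambda}{\epsilon_{n,\alpha}}\right),
\end{equation}
for a sequence $\epsilon_{n,\alpha}>0$ still to be chosen. The crucial elementary estimate is that for $m<n$ one has $\partial_\lambda^m\big[\lambda^n\chi(\lambda/\epsilon)\big]\big|_{\lambda=0}=0$ and, in supremum norm,
\begin{equation}
\Big\|\partial_\lambda^m\big[\lambda^n\chi(\lambda/\epsilon)\big]\Big\|_\infty\le C_m\,\epsilon^{\,n-m},
\end{equation}
since $\chi(\lambda/\epsilon)$ is supported where $|\lambda|\le 2\epsilon$ and each $\lambda$-derivative hitting $\chi(\lambda/\epsilon)$ costs a factor $\epsilon^{-1}$. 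Choosing $\epsilon_{n,\alpha}$ small enough that, for all $m,j\le n$, the $C^j$-norm in $x$ of $\rho_\alpha h_n$ times $C_m\,\epsilon_{n,\alpha}^{\,n-m}$ is bounded by $2^{-n}$ (a finite set of conditions for each $n$, the earlier $\epsilon_{n,\alpha}$ being already fixed), the tail of the series converges in every $C^j$ norm on $U_\alpha\times\R$. Hence $H_\alpha$ is a smooth section supported in $U_\alpha\times\R$.

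It then remains to read off the jet at $\lambda=0$ and to patch. Differentiating term by term, which is legitimate by the $C^\infty$ convergence just established, and using that $\chi\equiv1$ near $\lambda=0$, only the $n$-th term survives the $m$-th derivative at $\lambda=0$, giving $\partial_\lambda^mH_\alpha|_{\lambda=0}=i^m m!\,(\rho_\alpha h_m)$, i.e. $\tfrac{1}{i^m m!}\partial_\lambda^m H_\alpha|_{\lambda=0}=\rho_\alpha h_m$. Setting $H=\sum_\alpha H_\alpha$, local finiteness of the cover makes $H$ a smooth section of the pulled-back bundle over ${\mathcal N}\times\R$, and summing over $\alpha$ yields $\tfrac{1}{i^n n!}\partial_\lambda^n H|_{\lambda=0}=\sum_\alpha\rho_\alpha h_n=h_n$, as required. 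I expect the only genuinely delicate point to be the choice of the truncation parameters $\epsilon_{n,\alpha}$ and the verification that the resulting series converges in $C^\infty$; once that estimate is secured, the vanishing of the lower-order derivatives and the partition-of-unity assembly are routine.
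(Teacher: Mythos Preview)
Your proof is correct and follows essentially the same approach as the paper: a Borel-type construction with $\lambda^n\chi(\lambda/\epsilon_n)$ cutoffs and a choice of $\epsilon_n$ forcing $C^\infty$ convergence, globalized by a locally finite partition of unity subordinate to a trivializing cover. The only cosmetic differences are that the paper first works on a fixed precompact chart with an explicit choice $\lambda_n=\min\{1/(n{+}1),\,1/L_n\}$ and multiplies by the partition of unity afterwards, whereas you multiply by $\rho_\alpha$ first; also note that your constant $C_m$ in the bound $\|\partial_\lambda^m[\lambda^n\chi(\lambda/\epsilon)]\|_\infty\le C_m\epsilon^{n-m}$ in fact depends on $n$ as well, which is harmless since you choose $\epsilon_{n,\alpha}$ separately for each $n$.
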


\begin{proof}
We first construct such a function on an open set $O\subset \R^d$ with compact closure and such that the bundle is trivial. In this case it is enough to restrict to the case of scalar functions.

Let us introduce a smooth function $\chi\colon \R\rightarrow \R$ with the property that
$\chi(x)=1$ for $x\in\left[-\frac{1}{2},\frac{1}{2}\right]$ and $\chi(x)=0$ for $|x|\geq 1$.
Let us also denote 
\begin{equation}
C_k=\sup_{\lambda\in[-1,1], 0\leq m\leq k} \left|\partial_\lambda^m\chi(\lambda)\right|,\quad
L_n=\sup_{x\in O, 0\leq|\alpha|\leq n} \left| \partial_x^\alpha h_n(x)\right|,
\end{equation}
where $\alpha$ denote multi-index with number of indices $|\alpha|$.

Define $\lambda_n=\min\left\{\frac{1}{n+1}, \frac{1}{L_n}\right\}$ and
\begin{equation}
H(x,\lambda)=\sum_{n=0}^\infty i^n\chi\left(\frac{\lambda}{\lambda_n}\right)\lambda^n h_n(x).
\end{equation}
We will show that every derivative of the series is uniformly convergent thus the result is a well-defined smooth function. 

Let us notice that 
\begin{equation}
\left|\partial_\lambda^k \left(i^n\chi\left(\frac{\lambda}{\lambda_n}\right)\lambda^n\right)\right|\leq C_k (n+1)^k\lambda_n^{n-k},
\end{equation}
and for $n\geq |\alpha|$
\begin{equation}
\left|\partial_x^\alpha\partial_\lambda^k \left(i^n\chi\left(\frac{\lambda}{\lambda_n}\right)\lambda^n\right)h_n(x)\right|\leq C_k (n+1)^k\lambda_n^{n-k}L_n=
C_k (n+1)^k\lambda_n^{n-k-1}\ \lambda_nL_n
\leq C_k (n+1)^{2k+1-n},
\end{equation}
hence the series of derivatives is uniformly convergent and $H$ is well-defined and smooth.
Moreover $\partial_\lambda^k \left.\chi\left(\frac{\lambda}{\lambda_n}\right)\right|_{\lambda=0}=0$ and the Taylor expansion is as desired.
 
 The global result is obtained by patching local functions multiplied by locally finite partition of unity $\{O_\eta\}$ of $\mathcal N$ such that every $O_\eta$ is diffeomorphic to open bounded subset of $\R^d$ and the bundle ${\mathcal W}$ is trivial on $O_\eta$. Let $\xi_\eta$ be a partition of unity $\xi_\eta$ supported on $O_\eta$, $\sum_\eta \xi_\eta(x)=1$. We constructed $H_\eta$ on every $O_\eta$. We can take $H(x,\lambda)=\sum_\eta \xi_\eta(x) H_\eta(x,\lambda)$.
\end{proof}

\bibliography{hadamard}{}
\bibliographystyle{ieeetr}

\end{document}